\documentclass[conference]{IEEEtran}
\usepackage{amsmath,amssymb,amsthm}
\usepackage{graphicx}
\graphicspath{{figures/}{./}}
\usepackage{booktabs}
\usepackage{braket}
\usepackage{tikz}
\usetikzlibrary{arrows.meta,positioning,fit,backgrounds}
\usepackage{quantikz}
\usepackage[hidelinks]{hyperref}

\newtheorem{proposition}{Proposition}
\newtheorem{theorem}{Theorem}
\newtheorem{lemma}{Lemma}
\newtheorem{definition}{Definition}

\DeclareMathOperator{\Tr}{Tr}
\DeclareMathOperator{\supp}{supp}
\DeclareMathOperator{\Comm}{Comm}

\IEEEoverridecommandlockouts

\begin{document}

\title{Blind Symmetry Matching in Quantum States,\\ with Application to Shot-Count Reduction}

\author{\IEEEauthorblockN{Mitchell A. Thornton\thanks{M. A. Thornton is with the
Darwin Deason Institute for Cyber Security, Department of Electrical and Computer
Engineering, Southern Methodist University, Dallas, Texas, USA (e-mail:
mitch@smu.edu).}}}

\maketitle

\begin{abstract}
Measuring a quantum computation in a basis adapted to a symmetry it carries reduces
the repeated measurements, commonly referred to as ``shots'' in the quantum computing
community, needed to read a statistical answer. Detecting the symmetry a quantum state carries has many uses, among them certifying a
claimed symmetry, identifying a conserved-charge sector, flagging symmetry-breaking as an
error signature, and selecting a compression or readout basis; shot-count reduction is
chosen here as one exemplary use case. Existing methods assume the symmetry is known in
advance; we remove that assumption. When the carried symmetry is unknown, it is
discovered from the data by a symmetry test that scores candidate groups, and the largest
passing group is exploited as the measurement basis. We state the pipeline precisely,
prove the selection rule is unbiased, and charge discovery in full. Two conditions are
treated, both detected by the same score with a different projection: a weak condition,
commutation with the representation, and a strong condition, confinement to a single
charge sector, the distinction drawn in the quantum-reference-frame literature. Seeded
demonstrations show the loop wins net of discovery: weak matching on momentum readout
reduces shots by a factor widening from ten to several thousand, and strong matching on a
two-system target reduces them by a further factor of the subsystem size. Blind symmetry
matching is a practical shot-reduction preprocessor.
\end{abstract}

\begin{IEEEkeywords}
Algebraic diversity, symmetry-adapted measurement, shot-count reduction, symmetry
testing, group representation, quantum reference frames
\end{IEEEkeywords}

\section{Introduction}

A quantum computation ends in a measurement, and the cost of that measurement, counted
in repeated state preparations, the number of repeated measurements commonly referred
to as ``shots'' in the quantum computing community, is often the binding resource on
near-term hardware. When the quantity of interest is statistical, an expectation value
or a distribution rather than a single sharp outcome, measuring in a basis adapted to a
symmetry the target carries reads that quantity in fewer shots than computational-basis
readout. This reduction is established and exploited across several settings: grouping
commuting Pauli terms in variational energy
estimation~\cite{verteletskyi2020measurement,yen2020measuring,crawford2021efficient,
gokhale2020measurement}, symmetry-adapted classical shadows~\cite{zhao2024group,
sauvage2024classical}, and symmetry-resolved estimation of structured
states~\cite{thornton2026quantumalgebraicdiversitysinglecopy}.

Every one of these methods assumes the symmetry is known in advance, read off the
structure of the problem. For a target whose structure is not given, the matched basis
cannot be written down, and the reduction is unavailable. A separate line of work tests
or identifies symmetry from data~\cite{laborde2023testing,laborde2022hamiltonian,
hayashi2025predicting}, but it stops at the test, certifying or rejecting a symmetry
without closing the loop to a reduced measurement cost on a target.

This report closes that loop and states the pipeline precisely. The contribution is a
procedure, demonstrated end to end, that discovers the matched symmetry group from the
data and then exploits it to reduce the shot count, and that wins net of its own
discovery cost. The discovery step scores a family of candidate groups by a symmetry
test, each score a single scalar estimable from few shots, and selects the largest group
whose score certifies a match. The exploitation step measures the target in the selected
symmetry-adapted basis. The discovery cost is charged in full; the central finding is
that it is cheap relative to the savings and grows cheaper as the target grows larger.

Blind symmetry matching is a primitive, not a single application. Once the symmetry an
unknown state carries is discovered, the same information serves many ends. It certifies a
claimed symmetry, the task of symmetry testing, by reporting whether a candidate group is
carried. It identifies which conserved-charge or superselection sector a state occupies,
the strong mode below. It flags symmetry-breaking as a signature of noise or error, since a
state that should be symmetric but scores below one has been corrupted. It selects a
symmetry-adapted basis for compression, for state verification, or for readout. And it
prepares the ground for symmetry-resolved estimation of any functional of the state. This
report develops one of these, shot-count reduction, in full and to a demonstrated win,
because it admits a clean end-to-end accounting; the discovery primitive and the single
circuit that realizes it (Section~\ref{sec:circuit}) are common to all of them, and the
generality is not confined to the cyclic groups of the demonstrations
(Section~\ref{sec:nonabelian}). The code reproducing every result accompanies this
paper.\footnote{\url{https://github.com/mitch-thornton/quantum-state-symmetry-discovery}}

We treat two notions of what it means for a target to carry a symmetry, following a
distinction made precise in the quantum-reference-frame
literature~\cite{doat2026symmetry}. The \emph{weak} condition is commutation with the
group representation, equivalently block-diagonality in the charge basis; its projector
is the incoherent group average, the twirl onto the commutant, which is the object
underlying algebraic diversity~\cite{thornton2026algebraicdiversitygrouptheoreticspectral,
thornton2026algebraicdiversityprinciplesgrouptheoretic} and geometric quantum machine
learning~\cite{ragone2022representation,larocca2022groupinvariant}. The \emph{strong}
condition is confinement to a single charge sector; its projector is the coherent group
average. Both conditions are detected by the same normalized score with a different
projection, and both drive the same pipeline. The strong condition is more restrictive
and, when it holds, yields a larger reduction.

\section{Symmetry-adapted measurement and the matching score}
\label{sec:score}

Let a finite group $G$ act on $\mathbb{C}^d$ by a unitary representation $\pi$, with
$U_g=\pi(g)$. Two projections express the two notions of symmetry. The \emph{weak twirl}
\begin{equation}
\mathcal{T}^G_W(R)=\frac{1}{|G|}\sum_{g\in G}U_g\,R\,U_g^\dagger
\label{eq:weak}
\end{equation}
is the orthogonal projection onto the commutant
$\Comm(\pi)=\{X:XU_g=U_gX\;\forall g\}$, equivalently the block-diagonal operators in the
charge basis. The \emph{strong twirl}
\begin{equation}
\mathcal{T}^G_S(R)=\frac{1}{|G|^2}\sum_{g,g'\in G}U_g\,R\,U_{g'}^\dagger
=P_0\,R\,P_0
\label{eq:strong}
\end{equation}
factors through $P_0=\frac{1}{|G|}\sum_g U_g$, the projector onto the invariant
subspace $H_0=\{\ket{\psi}:U_g\ket{\psi}=\ket{\psi}\;\forall g\}$, so
$\mathcal{T}^G_S$ is compression to the zero-charge sector. Replacing $P_0$ by the
projector $P_c$ onto any fixed charge sector gives the same construction for confinement
to sector $c$, the fixed-particle-number situation. Both maps are orthogonal projections
in the Hilbert-Schmidt inner product $\langle X,Y\rangle=\Tr[X^\dagger Y]$.

\begin{definition}[matching score]
For an orthogonal projection $\mathcal{T}$ on operators, the matching score of a target
$R\neq 0$ is
\begin{equation}
s_{\mathcal{T}}(R)=\frac{\langle \mathcal{T}(R),R\rangle}{\langle R,R\rangle}
=\frac{\Tr[\mathcal{T}(R)\,R]}{\Tr[R^2]}.
\label{eq:score}
\end{equation}
\end{definition}

\begin{proposition}[go/no-go]
\label{prop:gonogo}
$s_{\mathcal{T}}(R)\in[0,1]$, and $s_{\mathcal{T}}(R)=1$ if and only if
$\mathcal{T}(R)=R$, that is if and only if $R$ carries the symmetry tested by
$\mathcal{T}$.
\end{proposition}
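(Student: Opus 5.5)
The plan is to use only that $\mathcal{T}$ is an orthogonal projection in the Hilbert--Schmidt inner product, i.e.\ $\mathcal{T}^2=\mathcal{T}$ and $\mathcal{T}$ self-adjoint, $\langle \mathcal{T}X,Y\rangle=\langle X,\mathcal{T}Y\rangle$. This is asserted in the text for both $\mathcal{T}^G_W$ and $\mathcal{T}^G_S$, and I would verify it directly. For the weak twirl, self-adjointness follows from cyclicity of the trace after reindexing $g\mapsto g^{-1}$, and idempotence follows from the group law. For the strong twirl, $\mathcal{T}^G_S(R)=P_0RP_0$ with $P_0$ a Hermitian projector (Hermitian because $U_g^\dagger=U_{g^{-1}}$), so both properties are immediate.

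The key step is the orthogonal decomposition $R=\mathcal{T}(R)+(R-\mathcal{T}(R))$, whose two pieces are orthogonal since $\langle \mathcal{T}R,R-\mathcal{T}R\rangle=\langle R,\mathcal{T}R-\mathcal{T}^2R\rangle=0$. From this,
\begin{equation*}
\langle \mathcal{T}(R),R\rangle=\langle \mathcal{T}(R),\mathcal{T}(R)\rangle=\|\mathcal{T}(R)\|^2 .
\end{equation*}
Here $\|X\|^2=\langle X,X\rangle$, and Pythagoras gives $\|R\|^2=\|\mathcal{T}(R)\|^2+\|R-\mathcal{T}(R)\|^2$. Hence
\begin{equation*}
s_{\mathcal{T}}(R)=\frac{\|\mathcal{T}(R)\|^2}{\|R\|^2}=1-\frac{\|R-\mathcal{T}(R)\|^2}{\|R\|^2},
\end{equation*}
and the score is real and lies in $[0,1]$ because $R\neq 0$. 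This representation also settles the equality case: $s_{\mathcal{T}}(R)=1$ if and only if $\|R-\mathcal{T}(R)\|=0$, which happens if and only if $\mathcal{T}(R)=R$, since the Hilbert--Schmidt form is positive definite.

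Next I would reconcile the two forms of the score in \eqref{eq:score}. The numerator is $\langle\mathcal{T}(R),R\rangle=\Tr[\mathcal{T}(R)^\dagger R]$, which equals $\Tr[\mathcal{T}(R)R]$ once $\mathcal{T}(R)$ is Hermitian. This holds for Hermitian $R$, such as a density operator, because both twirls preserve Hermiticity. For non-Hermitian $R$ I would take the inner-product form as the definition.

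Finally, I would identify ``$\mathcal{T}(R)=R$'' with carrying the symmetry. For $\mathcal{T}^G_W$, the fixed points are exactly the range of the projection, namely $\Comm(\pi)$, which is the weak condition. If $R$ commutes with every $U_g$, each term of the average equals $R$. Conversely, if $R=\mathcal{T}^G_W(R)$, then conjugating by $U_h$ and reindexing shows that $R$ commutes with every $U_h$. For $\mathcal{T}^G_S$, the condition $P_0RP_0=R$ holds if and only if $R$ is supported on $H_0$ on both sides, which is confinement to the zero-charge sector; the same argument works for $P_c$.

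The main, though minor, obstacle is this last identification together with checking that $P_0$ is a projector. For that I would show $P_0^2=P_0$ by the same reindexing $\frac{1}{|G|^2}\sum_{g,h}U_{gh}=P_0$, and that its range is $H_0$.
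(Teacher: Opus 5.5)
Your proposal is correct and follows the paper's own argument. The paper also splits $R=\mathcal{T}(R)+R^\perp$ orthogonally to get $s_{\mathcal{T}}(R)=\|\mathcal{T}(R)\|^2/\|R\|^2$; your Pythagoras form $1-\|R-\mathcal{T}(R)\|^2/\|R\|^2$ is the explicit version of its contractivity step, and your extra checks fill in points the paper only asserts. Those checks are that both twirls are Hilbert--Schmidt orthogonal projections, that the two trace forms agree for Hermitian $R$, and that the fixed points are exactly the commutant and the sector-supported operators.
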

\begin{proof}
Since $\mathcal{T}$ is an orthogonal projection, $R=\mathcal{T}(R)+R^\perp$ with
$\mathcal{T}(R)\perp R^\perp$, so $\langle\mathcal{T}(R),R\rangle=\|\mathcal{T}(R)\|^2$
and $s_{\mathcal{T}}(R)=\|\mathcal{T}(R)\|^2/\|R\|^2$. A projection is contractive, so
$0\le\|\mathcal{T}(R)\|^2\le\|R\|^2$, with the upper equality exactly when
$R^\perp=0$, i.e. $\mathcal{T}(R)=R$.
\end{proof}

Write $s^G_W$ and $s^G_S$ for the scores built from $\mathcal{T}^G_W$ and
$\mathcal{T}^G_S$. The acceptance probability of the symmetry tests
of~\cite{laborde2023testing,laborde2022hamiltonian} realizes $s^G_W$ on a quantum
computer without reconstructing $R$; $s^G_S$ is realized by projecting onto the sector,
with acceptance probability $\Tr[R\,P_0]$. Identifying a symmetry from data has also
been framed as subgroup hypothesis testing, for predicting symmetries of unknown
dynamics rather than selecting a group to reduce shots~\cite{hayashi2025predicting}.

The selection rule rests on two containment facts.

\begin{proposition}[under-selection is safe]
\label{prop:subgroup}
If $H\le G$ then $s^G_W(R)=1\Rightarrow s^H_W(R)=1$ and
$s^G_S(R)=1\Rightarrow s^H_S(R)=1$.
\end{proposition}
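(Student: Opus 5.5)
By Proposition~\ref{prop:gonogo}, each implication reduces to a statement about fixed points: $s^G_W(R)=1$ means $\mathcal{T}^G_W(R)=R$, and we must show $\mathcal{T}^H_W(R)=R$. The same holds for the strong score. So the plan is to show that invariance under the larger group implies invariance under the subgroup, first for the weak twirl and then for the strong one.

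\emph{Weak case.} I will use the characterization of $\mathcal{T}^G_W$ as the orthogonal projection onto $\Comm(\pi)$. Then $\mathcal{T}^G_W(R)=R$ means $R\in\Comm(\pi|_G)$, so $RU_g=U_gR$ for every $g\in G$. In particular this holds for every $h\in H\subseteq G$. Each term of the $H$-average is therefore $U_hRU_h^\dagger=R$, which gives $\mathcal{T}^H_W(R)=R$.

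If one does not want to rely on the commutant characterization, a direct route works. From $\mathcal{T}^G_W(R)=R$, apply conjugation by any $U_k$ with $k\in G$ and reindex the sum $g\mapsto kg$. This shows $U_kRU_k^\dagger=R$, and the argument then finishes as above.

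\emph{Strong case.} Here $\mathcal{T}^G_S(R)=P_0^GRP_0^G$, where $P_0^G$ projects onto $H_0^G=\{\psi: U_g\psi=\psi\ \forall g\in G\}$. The key containment is $H_0^G\subseteq H_0^H$, because fixing every $g\in G$ in particular fixes every $h\in H$. Hence $P_0^HP_0^G=P_0^G=P_0^GP_0^H$.

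The fixed-point condition $R=P_0^GRP_0^G$ then gives
\[
P_0^HRP_0^H=P_0^HP_0^GRP_0^GP_0^H=P_0^GRP_0^G=R,
\]
which is $\mathcal{T}^H_S(R)=R$. The variant with a charged sector $P_c$ is handled the same way. A $G$-isotypic sector restricted to $H$ lies inside a sum of $H$-sectors. This carries the tested symmetry under $\mathcal{T}^H$ only if the $H$-projection is taken to be the one onto the sector containing it, so for the statement as written I would restrict to the $P_0$ version.

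\emph{Main obstacle.} The only delicate step is the projector identity $P_0^HP_0^G=P_0^G$. I would verify it from the subspace containment together with the fact that both operators are orthogonal projections. Alternatively it can be checked directly: $P_0^HP_0^G=\frac{1}{|H||G|}\sum_{h,g}U_{hg}=P_0^G$, since for each fixed $h$ the map $g\mapsto hg$ permutes $G$. Everything else is immediate from Proposition~\ref{prop:gonogo}.
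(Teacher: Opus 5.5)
Your proof is correct and follows the same route as the paper. The paper argues via $\Comm(\pi|_G)\subseteq\Comm(\pi|_H)$ and $H_0^G\subseteq H_0^H$, concludes that the range of each $G$-twirl lies inside the range of the corresponding $H$-twirl, and then invokes Proposition~\ref{prop:gonogo}; your projector identity $P_0^HP_0^G=P_0^G=P_0^GP_0^H$ (and the term-by-term check in the weak case) just spells out the step from range containment to $\mathcal{T}^H(R)=R$ that the paper leaves implicit.
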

\begin{proof}
A larger group imposes more commutation constraints, so
$\Comm(\pi|_G)\subseteq\Comm(\pi|_H)$ and the range of $\mathcal{T}^G_W$ is contained in
that of $\mathcal{T}^H_W$; the weak claim follows from Proposition~\ref{prop:gonogo}. For
the strong case $H_0^G=\bigcap_{g\in G}\ker(U_g-I)\subseteq\bigcap_{h\in
H}\ker(U_h-I)=H_0^H$, so the range of $\mathcal{T}^G_S$ is contained in that of
$\mathcal{T}^H_S$.
\end{proof}

\begin{proposition}[strong implies weak]
\label{prop:nest}
$s^G_S(R)=1\Rightarrow s^G_W(R)=1$.
\end{proposition}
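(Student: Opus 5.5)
By Proposition~\ref{prop:gonogo}, the hypothesis $s^G_S(R)=1$ is equivalent to the fixed-point condition $\mathcal{T}^G_S(R)=R$. By the factorization in (\ref{eq:strong}), this reads $R=P_0RP_0$. The goal $s^G_W(R)=1$ is, by the same proposition, equivalent to $\mathcal{T}^G_W(R)=R$, that is, to $R\in\Comm(\pi)$. So the plan is to show that every operator of the form $P_0RP_0$ commutes with each $U_g$. The cleanest route is to prove the range inclusion $\operatorname{ran}\mathcal{T}^G_S\subseteq\operatorname{ran}\mathcal{T}^G_W$, which parallels the structure of the proof of Proposition~\ref{prop:subgroup}.

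The key step is to show that $U_gP_0=P_0=P_0U_g$ for every $g\in G$. The left identity holds because $P_0$ projects onto $H_0$, on which each $U_g$ acts trivially. Concretely, $U_gP_0=\frac{1}{|G|}\sum_{h}U_{gh}=P_0$ after reindexing $h\mapsto gh$ over the group. The right identity follows in the same way by reindexing $h\mapsto hg$. Alternatively, take adjoints: $P_0$ is self-adjoint, since $U_h^\dagger=U_{h^{-1}}$ and the sum runs over all of $G$, and $U_g^\dagger=U_{g^{-1}}$ is again in the representation. Given these identities, for $R=P_0RP_0$ we get
\begin{equation*}
U_gR=U_gP_0RP_0=P_0RP_0=P_0RP_0U_g=RU_g,
\end{equation*}
so $R\in\Comm(\pi)$. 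Then every term of the weak twirl satisfies $U_gRU_g^\dagger=R$, the average gives $\mathcal{T}^G_W(R)=R$, and Proposition~\ref{prop:gonogo} yields $s^G_W(R)=1$.

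There is no real obstacle here. The only point that needs care is the two-sided identity $U_gP_0=P_0U_g=P_0$, which relies on reindexing over the whole group and on the unitarity of the representation. I would also remark that the same argument applies to confinement to a general charge sector $P_c$, but only in the weaker form $U_gP_c=P_cU_g$. For a one-dimensional character $\chi_c$, the relation $U_gP_c=\chi_c(g)P_c$ makes $U_gRU_g^\dagger=|\chi_c(g)|^2R=R$. For a non-abelian isotypic sector, the commutation $[U_g,P_c]=0$ alone does not suffice, so I would state the proposition for $P_0$ (or for abelian charges) as the paper defines it.
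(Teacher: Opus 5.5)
Your proof is correct and follows the paper's route: you reduce $s^G_S(R)=1$ to $R=P_0RP_0$ (support in $H_0$), use that each $U_g$ acts trivially there to get $U_gRU_g^\dagger=R$, and conclude via Proposition~\ref{prop:gonogo}; your identities $U_gP_0=P_0U_g=P_0$ are the explicit algebraic form of the paper's ``every $U_g$ acts as the identity on $H_0$.'' Your side remark on general sectors $P_c$ is also correct and goes slightly beyond the paper: the conclusion holds for one-dimensional characters, but fails for higher-dimensional isotypic sectors.
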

\begin{proof}
If $\supp(R)\subseteq H_0$ then, since every $U_g$ acts as the identity on $H_0$,
$U_g R U_g^\dagger=R$ for all $g$, hence $\mathcal{T}^G_W(R)=R$ and
$s^G_W(R)=1$ by Proposition~\ref{prop:gonogo}.
\end{proof}

Conversely, over-selecting a group the target does not carry scores below $1$: if $R$
carries exactly $G_\star$ and $G\supsetneq G_\star$ acts so that
$\mathcal{T}^G(R)\neq R$, then $s^G(R)<1$ and $G$ is rejected. The selection rule is
therefore to take the largest group whose score certifies a match, which by
Proposition~\ref{prop:subgroup} never imposes a symmetry the target lacks.

The cost of estimating a score is set by the gap that separates a match from the nearest
over-selection. Distinguishing $s=1$ from $s=1-\gamma$ to confidence requires resolving
$\gamma$, so a candidate is scored in $O(\gamma^{-2})$ shots, improved to
$O(\gamma^{-1})$ by amplitude estimation~\cite{brassard2002amplitude}.

\section{The blind-matching pipeline}
\label{sec:pipeline}

\subsection{When shot-count reduction is possible}
\label{sec:screen}

Shot-count reduction is the application developed here, and not every computation admits it.
A screen applied before the pipeline decides eligibility.

\begin{lemma}[no-go for sharp targets]
\label{lem:nogo}
If the target is a single deterministic outcome carried by a state
$\rho=\ket{b^\ast}\!\bra{b^\ast}$ sharp in a basis $B^\ast$, then the shot cost is $O(1)$ and
no change of basis lowers it; any basis mutually unbiased to $B^\ast$ returns an outcome
distribution independent of $b^\ast$, carrying no information about it.
\end{lemma}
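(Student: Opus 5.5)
The plan has three parts: show that one shot in $B^\ast$ already suffices, observe that no procedure can beat one shot, and then show that a mutually unbiased basis yields a distribution carrying no information about $b^\ast$. Throughout, ``shot cost'' means the number of preparations of $\rho$ needed to output $b^\ast$ with a fixed confidence $1-\delta$.

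\textbf{Upper bound.} Measure $\rho=\ket{b^\ast}\!\bra{b^\ast}$ in $B^\ast=\{\ket{b}\}$. By the Born rule the outcome probabilities are $p(b)=|\langle b|b^\ast\rangle|^2=\delta_{b,b^\ast}$. A single shot therefore returns $b^\ast$ with probability one, and the cost is $O(1)$ for any $\delta$, independent of $d$.

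\textbf{Lower bound.} Any strategy that outputs $b^\ast$ must prepare and measure the state at least once. Hence no basis, symmetry-adapted or otherwise, can bring the cost below one shot, and the $O(1)$ cost is optimal. I will state this as a counting fact rather than an information-theoretic bound. The only point to make explicit is that the statement compares bases at fixed target, so ``lowers'' means beating the constant $1$.

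\textbf{Mutually unbiased case.} Let $B'=\{\ket{e_k}\}$ be mutually unbiased to $B^\ast$, so that $|\langle e_k|b\rangle|^2=1/d$ for all $k$ and $b$. The outcome distribution of $\rho$ in $B'$ is then $q(k)=|\langle e_k|b^\ast\rangle|^2=1/d$. This is the uniform distribution whatever $b^\ast$ is. Consequently the likelihood of any shot record is the same under every hypothesis $b^\ast$, the mutual information between $b^\ast$ and the record is zero, and no estimator does better than guessing with success $1/d$, for any number of shots. The step I expect to need the most care is phrasing ``carries no information'' precisely. I would do it by noting that the family of outcome distributions $\{q_{b^\ast}\}$ is constant in $b^\ast$, so the data processing inequality gives zero information for any post-processing.

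Finally, I would remark on the intermediate bases. For a general basis, $q(k)=|\langle e_k|b^\ast\rangle|^2$ is less concentrated than the point mass, so it can only raise the cost relative to $B^\ast$. This is why the screen in this section excludes sharp targets from the pipeline.
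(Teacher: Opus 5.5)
Your proposal is correct and takes the same route as the paper, which justifies the lemma only with the remark that a sharp answer is read in one shot in its own basis (your Born-rule point mass $p(b)=\delta_{b,b^\ast}$) and erased in the conjugate basis (your uniform $q(k)=1/d$); your one-shot lower bound and zero-mutual-information phrasing make that remark precise. Your closing remark on intermediate bases is not needed, and its stated reason is loose: the cost in a basis depends on how distinguishable the family $\{q_{b^\ast}\}$ is, not on how concentrated a single $q$ is (a relabeling of $B^\ast$ still gives point masses), though the conclusion already follows from your lower bound of one shot.
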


A sharp answer is read in one shot in its own basis and erased in the conjugate basis,
leaving no graded quantity for a symmetry-adapted basis to improve. A reduction therefore
requires a \emph{statistical} target, an expectation value or a distribution. The eligible
case is the converse: a statistical target whose state carries a nontrivial symmetry whose
adapted basis block-diagonalizes the estimation, with the reduction exceeding the discovery
overhead. Large reductions come from collapsing measurement settings, smaller ones from
reducing the estimation dimension.

This screen gates the shot-reduction use case only. The other uses of blind symmetry
matching, certifying a claimed symmetry, identifying the occupied charge sector, and flagging
symmetry-breaking as an error signature, do not require a statistical target and are not
subject to it: a sharp state has a perfectly well-defined symmetry to discover. The screen is
a precondition for this application, not for the primitive.

A computation that returns a sharp answer can sometimes be modified to return a statistical
one, moving it from the no-go case into eligibility. Textbook Grover search marks a single
basis state and amplifies it to a sharp outcome, the no-go case of Lemma~\ref{lem:nogo}.
Replacing the single-state oracle with one that marks a \emph{region}, a projector-valued
oracle $I-2P_\lambda$ that amplifies an entire subspace rather than one
state~\cite{brassard2002amplitude}, makes the output a distribution over that region, a
statistical target eligible for reduction. The region is not arbitrary: to carry a usable
symmetry its marked entries must form a group orbit or an isotypic sector, so that the
amplified state lands in a symmetry-adapted subspace and the discovered group is the one that
block-diagonalizes it. Marking a sector this way reaches the strong condition by amplitude
amplification rather than by post-selection.

The target is a statistical quantity of an unknown state $R$ whose carried symmetry is
not given. A structure-blind readout estimates the quantity without using any symmetry,
at a shot cost $N_{\mathrm{blind}}$ that grows with the dimension of the structure-blind
parameterization. The pipeline replaces this with the procedure of
Fig.~\ref{fig:pipeline}, stated precisely as follows and realized as a single quantum
circuit in Section~\ref{sec:circuit}.

\begin{figure}[t]
\centering
\resizebox{\columnwidth}{!}{%
\begin{tikzpicture}[
  font=\footnotesize,
  box/.style={draw, rounded corners, align=center, minimum height=7mm, inner sep=3pt},
  >={Stealth[length=2mm]}]
\node[box] (t) {black-box\\ target $R$};
\node[box, right=5mm of t, fill=blue!6] (d) {\textbf{discovery}\\ score
  $\{G_i\}$ by\\ symmetry test};
\node[box, right=5mm of d, fill=blue!6] (s) {\textbf{select}\\ largest\\ passing $G^\star$};
\node[box, right=5mm of s, fill=green!7] (e) {\textbf{exploit}\\ measure in\\ adapted basis};
\node[box, right=5mm of e] (o) {estimate};
\draw[->] (t) -- (d);
\draw[->] (d) -- (s);
\draw[->] (s) -- (e);
\draw[->] (e) -- (o);
\node[below=5.5mm of d, align=center] (cd) {$N_{\mathrm{disc}}=m\,O(\gamma^{-2})$};
\node[below=5.5mm of e, align=center] (ce) {$N_{\mathrm{exploit}}$};
\draw[->, dashed] (d) -- (cd);
\draw[->, dashed] (e) -- (ce);
\node[align=center, below=2mm of cd.south -| s] (win)
  {win: $N_{\mathrm{disc}}+N_{\mathrm{exploit}}<N_{\mathrm{blind}}$};
\end{tikzpicture}}
\caption{The blind-matching pipeline. Discovery scores a candidate family by a symmetry
test (weak twirl or strong sector projection), the selection rule takes the largest
passing group, and exploitation reads the target in its symmetry-adapted basis. Here $R$ is the
unknown state, or its covariance, whose carried symmetry is to be discovered. Every shot,
discovery included, is charged against the structure-blind baseline.}
\label{fig:pipeline}
\end{figure}

\noindent\textbf{Pipeline.}
\emph{Inputs:} black-box preparations of $R$; a candidate family
$\{G_1,\dots,G_m\}$ ordered so that a larger group, when carried, yields a larger
reduction; a mode, weak ($\mathcal{T}_W$) or strong ($\mathcal{T}_S$); a target accuracy
$\varepsilon$; a confidence level.
\begin{enumerate}
\item \emph{Discovery.} For each $G_i$, estimate $s_{\mathcal{T}}^{G_i}(R)$ from
$N_{\mathrm{score}}$ symmetry-test shots. In the weak mode the test is the twirl
acceptance of~\cite{laborde2023testing}; in the strong mode it is projection onto the
candidate sector, and once a group is fixed the occupied sector is read directly from the
charge observable.
\item \emph{Selection.} Accept candidates whose estimated score exceeds a threshold
$\tau$ placed between $1$ and the best rejected score, and set $G^\star$ to the largest
accepted group. If none pass, fall back to the structure-blind readout.
\item \emph{Exploitation.} Measure $R$ in the symmetry-adapted basis of $G^\star$: the
block-diagonal basis in the weak mode, the matched sector in the strong mode, estimating
the target functional to accuracy $\varepsilon$ at cost $N_{\mathrm{exploit}}$.
\end{enumerate}
\emph{Output:} an estimate of the target, at total cost
$N_{\mathrm{disc}}+N_{\mathrm{exploit}}$ with
$N_{\mathrm{disc}}=m\,N_{\mathrm{score}}$.

\begin{theorem}[win condition]
\label{thm:win}
The pipeline reads the target in fewer shots than the structure-blind baseline whenever
\begin{equation}
m\,N_{\mathrm{score}}+N_{\mathrm{exploit}}<N_{\mathrm{blind}}.
\label{eq:win}
\end{equation}
Since $N_{\mathrm{score}}=O(\gamma^{-2})$ depends on the candidate count and the score
gap while $N_{\mathrm{blind}}/N_{\mathrm{exploit}}$ is the reduction factor of the matched
basis, the inequality holds with widening margin once the reduction factor grows past the
fixed discovery overhead.
\end{theorem}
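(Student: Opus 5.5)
The proof is essentially an accounting argument, so I will organize it around the total shot count of the pipeline. The first step is to fix what ``the pipeline reads the target'' means. Discovery must select a group $G^\star$ whose adapted basis is valid for $R$, with the prescribed confidence. Exploitation must then return an estimate to accuracy $\varepsilon$. Every shot spent along the way is counted. Under that convention the total cost is $N_{\mathrm{disc}}+N_{\mathrm{exploit}}$ with $N_{\mathrm{disc}}=m\,N_{\mathrm{score}}$, exactly as in the output line of the pipeline. The first claim is then immediate: whenever \eqref{eq:win} holds, the pipeline's total is strictly below $N_{\mathrm{blind}}$.

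The substance is showing that the pipeline's output is correct, so that this comparison is fair. Here I would use the containment results in order.

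\begin{enumerate}
\item By Proposition~\ref{prop:gonogo}, a group actually carried has score exactly $1$. Every over-selected group $G\supsetneq G_\star$ has score at most $1-\gamma$, where $\gamma$ is the gap to the best rejected candidate.
\item Placing $\tau$ in $(1-\gamma,1)$ and estimating each score to additive error below $\gamma/2$ separates the two cases. By a Hoeffding bound on the Bernoulli acceptance outcomes, this needs $N_{\mathrm{score}}=O(\gamma^{-2}\log(m/\delta))$ shots per candidate.
\item A union bound over the $m$ candidates then puts all estimates within tolerance simultaneously with probability $1-\delta$.
\item On that event the largest accepted group is the correct one. Proposition~\ref{prop:subgroup} guarantees that any accepted group imposes no symmetry the target lacks.
\item Proposition~\ref{prop:nest} covers the strong mode, since a strong match is also a weak match. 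The adapted basis is therefore legitimate, and exploitation attains accuracy $\varepsilon$ at cost $N_{\mathrm{exploit}}$.
\end{enumerate}

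If no candidate passes, the pipeline falls back to the blind readout. The win claim is conditional on \eqref{eq:win} and is unaffected by this case.

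For the ``widening margin'' clause, I would write $r=N_{\mathrm{blind}}/N_{\mathrm{exploit}}$ and divide \eqref{eq:win} by $N_{\mathrm{exploit}}$. This gives the equivalent condition $1+m\,N_{\mathrm{score}}/N_{\mathrm{exploit}}<r$. The discovery term $m\,N_{\mathrm{score}}$ depends only on $m$, $\gamma$ and $\delta$, not on the target accuracy or on the dimension driving $N_{\mathrm{blind}}$. So once $r$ exceeds $1+m\,N_{\mathrm{score}}/N_{\mathrm{exploit}}$, the savings $N_{\mathrm{blind}}-N_{\mathrm{exploit}}-m\,N_{\mathrm{score}}=N_{\mathrm{exploit}}(r-1)-m\,N_{\mathrm{score}}$ grow linearly in $r$. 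For fixed discovery overhead the margin therefore widens without bound as the reduction factor grows.

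The main obstacle is that $\gamma$ is a property of the unknown target, so $N_{\mathrm{score}}$ cannot honestly be treated as a fixed constant. My first approach is to state the theorem with $\gamma$ as a promised lower bound on the gap, which the candidate family must satisfy. I would then note that the factor $\log(m/\delta)$ only mildly strengthens the displayed $O(\gamma^{-2})$. If $\gamma$ shrank with system size faster than the reduction factor grows, the margin claim would fail. I would therefore present the clause as asymptotic under a gap assumption, not as unconditional.
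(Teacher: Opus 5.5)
Your proposal is correct and follows the paper's own reasoning. The theorem is the pipeline's cost accounting $N_{\mathrm{disc}}+N_{\mathrm{exploit}}$ with $N_{\mathrm{disc}}=m\,N_{\mathrm{score}}$, selection correctness rests on Propositions~\ref{prop:gonogo}--\ref{prop:nest} as in Section~\ref{sec:score}, and the margin clause is the reduction factor outgrowing a fixed discovery overhead; your Hoeffding/union-bound derivation and explicit gap promise make precise what the paper leaves as the bare $O(\gamma^{-2})$ remark. One caution on that added layer: by \eqref{eq:scorecirc} the score is estimated as a ratio with denominator $\Tr[\rho^2]$ (plus a heralding factor in the strong mode), so the Hoeffding constant hides a factor $\Tr[\rho^2]^{-2}$, which for mixed targets can grow with dimension and qualifies your claim that the discovery term is independent of the dimension driving $N_{\mathrm{blind}}$.
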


The reduction factor is mode-dependent. In the weak mode the matched basis is
block-diagonal, so $N_{\mathrm{exploit}}$ scales with $\sum_l d_l^2$ against
$N_{\mathrm{blind}}\sim d^2$ for $d=\sum_l d_l$. In the strong mode the target is
confined to one sector of dimension $d_c$, so $N_{\mathrm{exploit}}\sim d_c^2$ against
$d^2$, a larger factor $(d/d_c)^2$. The two demonstrations below instantiate each mode.

\section{Circuit realization}
\label{sec:circuit}

The discovery score~\eqref{eq:score} is the acceptance statistic of a single quantum
circuit, a controlled twirl followed by a Hilbert-Schmidt test in the symmetry-testing
family of~\cite{laborde2023testing,laborde2022hamiltonian}. The same circuit
(Fig.~\ref{fig:circuit}) certifies both conditions, and the only difference between them
is the fate of the group register. We give the construction in full so that the pipeline
is implementable from this paper alone.

The target enters as the density matrix $\rho=R/\Tr R$; the score is scale-invariant, so
$s_{\mathcal{T}}(\rho)=s_{\mathcal{T}}(R)$. A control register $C$ of
$\lceil\log_2|G|\rceil$ qubits, two copies $S_1,S_2$ of $\rho$, and one test ancilla $a$
are prepared, with $C$ in the uniform group state
$\ket{+_G}=|G|^{-1/2}\sum_{g\in G}\ket{g}$. The controlled group action
\begin{equation}
\mathrm{C}U=\sum_{g\in G}\ket{g}\!\bra{g}_C\otimes (U_g)_{S_1}
\label{eq:cu}
\end{equation}
acts on the control and the first copy, leaving the $C$--$S_1$ state
\begin{equation}
\frac{1}{|G|}\sum_{g,g'\in G}\ket{g}\!\bra{g'}_C\otimes U_g\,\rho\,U_{g'}^\dagger .
\label{eq:joint}
\end{equation}
The two readouts of the control register project~\eqref{eq:joint} onto the two notions of
symmetry. Discarding $C$ leaves $S_1$ in
\begin{equation}
\Tr_C\!\left[\frac{1}{|G|}\sum_{g,g'}\ket{g}\!\bra{g'}\otimes U_g\rho U_{g'}^\dagger\right]
=\frac{1}{|G|}\sum_{g}U_g\,\rho\,U_g^\dagger=\mathcal{T}^G_W(\rho),
\label{eq:weakcirc}
\end{equation}
the incoherent group average, the Reynolds projection onto the commutant. Post-selecting
$C$ on $\ket{+_G}$ instead leaves $S_1$ in
\begin{equation}
\frac{(\bra{+_G}\!\otimes\! I)\,[\,\cdot\,]\,(\ket{+_G}\!\otimes\! I)}{\Tr[P_0\rho P_0]}
=\frac{\mathcal{T}^G_S(\rho)}{\Tr[P_0\rho P_0]},
\label{eq:strongcirc}
\end{equation}
the coherent group average, heralded with probability $\Tr[P_0\rho P_0]$, the weight of
$\rho$ in the sector. Equations~\eqref{eq:weakcirc} and~\eqref{eq:strongcirc} are the
circuit-level form of the incoherent and coherent group averaging
of~\cite{doat2026symmetry}: discarding the group register tests the weak condition,
post-selecting it tests the strong condition.

A SWAP test between $S_1$, now carrying $\mathcal{T}(\rho)$, and the untwirled copy $S_2$
accepts with probability
\begin{equation}
P_{\mathrm{acc}}=\tfrac12\bigl(1+\Tr[\mathcal{T}(\rho)\,\rho]\bigr),
\label{eq:swap}
\end{equation}
and a SWAP test on two untwirled copies returns
$P_{\mathrm{pur}}=\tfrac12(1+\Tr[\rho^2])$. The matching score follows directly:
\begin{equation}
s^G_W=\frac{2P_{\mathrm{acc}}-1}{2P_{\mathrm{pur}}-1},\qquad
s^G_S=\Tr[P_0\rho P_0]\,\frac{2P_{\mathrm{acc}}-1}{2P_{\mathrm{pur}}-1},
\label{eq:scorecirc}
\end{equation}
with $\mathcal{T}=\mathcal{T}^G_W$ or $\mathcal{T}^G_S$ in~\eqref{eq:swap}. Amplitude
estimation~\cite{brassard2002amplitude} on the acceptance amplitude reads
$P_{\mathrm{acc}}$ to accuracy $\varepsilon$ in $O(\varepsilon^{-1})$ rather than
$O(\varepsilon^{-2})$ uses of the circuit, the quadratic acceleration quantified in
Section~\ref{sec:weak}. A linear-algebra check that the circuit
scores~\eqref{eq:scorecirc} reproduce the defining scores~\eqref{eq:score} is included in
the bundle.

\begin{figure}[t]
\centering
\resizebox{\columnwidth}{!}{%
\begin{quantikz}[row sep=0.22cm, column sep=0.38cm]
\lstick{$C:\ket{+_G}$} & \ctrl{1} & \qw & \qw & \qw & \gate{\substack{\text{trace (weak)}\\[-1pt]\bra{+_G}\ \text{(strong)}}} \\
\lstick{$S_1:\rho$} & \gate{U_g} & \qw & \swap{1} & \qw & \qw \\
\lstick{$S_2:\rho$} & \qw & \qw & \targX{} & \qw & \qw \\
\lstick{$a:\ket{0}$} & \qw & \gate{H} & \ctrl{-1} & \gate{H} & \meter{}
\end{quantikz}}
\caption{The discovery circuit. A controlled group action twirls one copy of the target;
discarding the group register $C$ realizes the weak twirl (incoherent average), while
post-selecting it on $\ket{+_G}$ realizes the strong twirl (coherent average, heralded by
the sector weight). The SWAP test on ancilla $a$ returns the matching score. The
controlled twirl and Hilbert-Schmidt test are in the symmetry-testing family
of~\cite{laborde2023testing}; amplitude estimation accelerates the readout. For a cyclic
group $\ket{+_G}=H^{\otimes n}\ket{0}$ and the controlled action is one modular adder.}
\label{fig:circuit}
\end{figure}

For a cyclic group the circuit is elementary. The uniform state is
$\ket{+_G}=H^{\otimes n}\ket{0}$; the controlled group action~\eqref{eq:cu} is a single
in-place modular adder, since $U_g$ translates by $g$; and exploitation applies the
quantum Fourier transform to reach the momentum, or charge, basis, after which a
computational-basis measurement reads the matched diagonal directly. In the strong mode
the charge labeling the sector is, for the two-system target, the sum of the two Fourier
indices modulo $L$, recovered from the same measurement. The basis-change unitary is the
only target-dependent part of exploitation, fixed once $G^\star$ is selected.

\section{Weak-symmetry demonstration}
\label{sec:weak}

We instantiate the weak mode on momentum readout of a translation-invariant chain, the
cyclic case where the symmetry-adapted basis is the Fourier basis and the structure-blind
baseline is correlation-matrix tomography. The target is a circulant covariance $R$ on
$L$ sites with an exponentially decaying correlation profile, carrying $\mathbb{Z}_L$.
The estimator is not told this. The quantity to estimate is the momentum distribution,
the Fourier-diagonal of $R$, to a fixed root-mean-square accuracy. All results are seeded
and reproduced by the bundle.

The candidate family is $\{e\}$, $\mathbb{Z}_L$, and the over-selection $S_L$.
Table~\ref{tab:scores} reports the exact scores. The trivial group and $\mathbb{Z}_L$
score $1$; the rule selects the larger, $\mathbb{Z}_L$. The over-selection $S_L$ scores
below $1$ with a gap that widens with $L$, from $0.10$ at $L=8$ to $0.50$ at $L=128$, so
discovery grows cheaper as the chain lengthens. Scoring each candidate from a finite
number of symmetry-test shots and applying the selection rule recovers $\mathbb{Z}_L$
from $50$ shots per candidate in $98.7\%$ of seeded trials and from $100$ shots in
$100\%$ (Fig.~\ref{fig:weakloop}A).

\begin{table}[t]
\centering
\caption{Weak-mode scores~\eqref{eq:score}. The trivial group and the true
$\mathbb{Z}_L$ certify a match; $S_L$ is rejected, with a gap that widens with $L$.}
\label{tab:scores}
\begin{tabular}{rcccc}
\toprule
$L$ & $s_{\{e\}}$ & $s_{\mathbb{Z}_L}$ & $s_{S_L}$ & gap $1-s_{S_L}$\\
\midrule
8   & 1.000 & 1.000 & 0.901 & 0.099\\
32  & 1.000 & 1.000 & 0.604 & 0.396\\
128 & 1.000 & 1.000 & 0.497 & 0.503\\
\bottomrule
\end{tabular}
\end{table}

Charging discovery in full, Table~\ref{tab:weakloop} compares the total pipeline cost
against structure-blind tomography at accuracy $0.02$. The pipeline wins at every size,
and the margin widens with $L$: the structure-blind cost grows as $L^2$ while the total
stays nearly flat, since discovery cost falls and the matched exploitation is a single
measurement setting. The advantage runs from about thirteenfold at $L=8$ to several
thousandfold at $L=128$ (Fig.~\ref{fig:weakloop}B). The exploitation reduction, the
$L^2/2$ factor of momentum readout over correlation tomography, is validated separately
in the bundle. The scoring step inherits the quadratic advantage of the symmetry test: on
a covariance carrying $D_4$, the matched group and its subgroups score exactly $1$ while
a mismatch and an over-selection score $0.978$, and an amplitude-estimation readout of
the score converges at rate $M^{-0.99}$ against $M^{-0.50}$ for direct
sampling~\cite{brassard2002amplitude}.

\begin{table}[t]
\centering
\caption{Weak-mode closed-loop budget at accuracy $0.02$ with three candidates. Total
(discovery plus exploitation) against structure-blind tomography.}
\label{tab:weakloop}
\begin{tabular}{rccccc}
\toprule
$L$ & $N_{\mathrm{disc}}$ & $N_{\mathrm{exploit}}$ & $N_{\mathrm{blind}}$ &
$N_{\mathrm{total}}$ & ratio\\
\midrule
8   & 2775 & 1837 & 58{,}757     & 4612 & 12.7$\times$\\
32  & 174  & 2331 & 1{,}193{,}442 & 2505 & 476.4$\times$\\
128 & 108  & 2458 & 20{,}133{,}768 & 2566 & 7846.4$\times$\\
\bottomrule
\end{tabular}
\end{table}

\begin{figure}[t]
\centering
\includegraphics[width=\columnwidth]{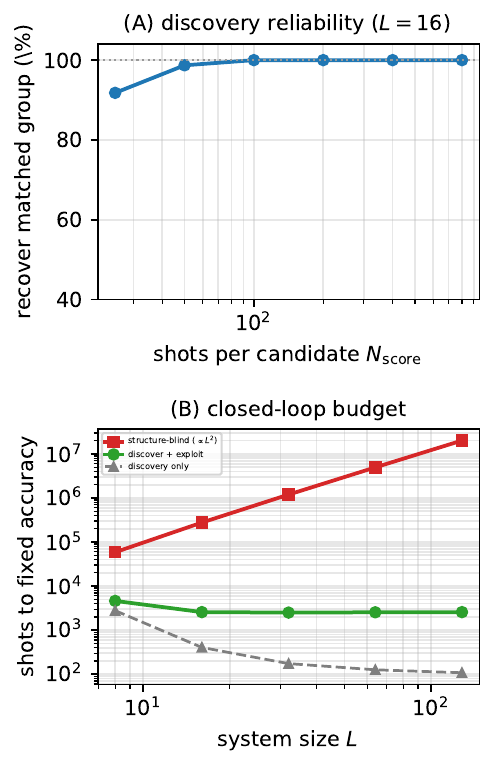}
\caption{Weak-mode loop on momentum readout. (A) Discovery recovers the matched group
from a few symmetry-test shots per candidate. (B) The structure-blind cost grows as
$L^2$ while the total pipeline cost stays nearly flat; discovery cost alone falls with
$L$.}
\label{fig:weakloop}
\end{figure}

\section{Strong-symmetry discovery}
\label{sec:strong}

The strong twirl that defines this mode is the coherent group average
\begin{equation}
\mathcal{T}^G_S(R)=\frac{1}{|G|^2}\sum_{g,g'\in G}U_g\,R\,U_{g'}^\dagger=P_0\,R\,P_0,
\label{eq:strongVI}
\end{equation}
set against the weak twirl $\mathcal{T}^G_W(R)=\frac{1}{|G|}\sum_g U_g R U_g^\dagger$
of~\eqref{eq:weak}, and the contrast between the two is the source of the names. The weak
twirl averages over a single copy of the group, normalization $1/|G|$, and projects onto the
commutant. The strong twirl averages over an independent pair $(g,g')$, normalization
$1/|G|^2$, the double sum factoring through the invariant-sector projector
$P_0=\frac{1}{|G|}\sum_g U_g$ as $P_0 R P_0$. The squared group order in the denominator marks
the coherent average, in which the two sums interfere, against the incoherent average of the
weak twirl, in which a single sum merely mixes; the confinement the coherent average enforces
is the stronger of the two conditions.

The strong condition, confinement to a single charge sector, is more restrictive than
commutation and, when it holds, yields a larger reduction: the target lives in a subspace
of dimension $d_c\ll d$, so estimating it there costs $d_c^2$ parameters against $d^2$.
This is the situation of a state in a fixed particle-number or fixed-charge sector, common
in chemistry and condensed matter, where the sector is usually known from the physics; the
pipeline supplies it when it is not.

We instantiate the strong mode on the two-system setting of~\cite{doat2026symmetry}: two
systems on an $L$-site ring, $G=\mathbb{Z}_L$ acting by the global shift $U\otimes U$, so
$H=\mathbb{C}^L\otimes\mathbb{C}^L$ carries $L$ charge sectors each of dimension $L$. A
single-register cyclic chain has a one-dimensional invariant subspace and so admits no
nontrivial strong reduction; the multiplicity of the two-system setting is what makes the
strong mode informative. The full target has $d^2=L^4$ parameters; a weakly symmetric
(block-diagonal) target has $L\cdot L^2=L^3$; a strongly symmetric (single-sector) target
has $L^2$. The reduction factors are $L$ for weak and $L^2$ for strong.

Table~\ref{tab:strongscores} reports the exact scores and confirms
Propositions~\ref{prop:gonogo} and~\ref{prop:nest} numerically. A strongly symmetric
target scores $1$ on both tests, consistent with strong implying weak. A block-diagonal
target that is not sector-confined scores $1$ on the weak test but strictly below on the
strong test, the deficit growing with $L$ as the mass spreads over more sectors, so the
strong score detects confinement that the weak score cannot. A generic target fails both.

\begin{table}[t]
\centering
\caption{Strong-mode scores in the two-system setting. A strongly symmetric target passes
both tests (Prop.~\ref{prop:nest}); a block-diagonal target passes weak but fails strong;
a generic target fails both.}
\label{tab:strongscores}
\begin{tabular}{rlcc}
\toprule
$L$ & target & weak score & strong score\\
\midrule
3 & strongly symmetric & 1.000 & 1.000\\
3 & weakly symmetric    & 1.000 & 0.428\\
3 & generic             & 0.675 & 0.255\\
5 & strongly symmetric & 1.000 & 1.000\\
5 & weakly symmetric    & 1.000 & 0.239\\
5 & generic             & 0.598 & 0.148\\
\bottomrule
\end{tabular}
\end{table}

The pipeline changes only in two places. Discovery is two-stage: the weak test first
identifies the group, after which the occupied sector is read directly from the charge
observable of that group, a single observable whose outcome labels the sector. Because
the charge measurement separates the true sector from the others by the full confinement
gap, sector discovery is nearly free: identifying the occupied sector at $L=16$ with a
$10\%$ leakage takes $5$ charge-measurement shots for $99.9\%$ reliability and $10$ shots
for $100\%$ (Fig.~\ref{fig:strongloop}A). Exploitation restricts tomography to the matched
sector.

Table~\ref{tab:strongloop} reports the closed-loop budget. The strong mode wins at every
size and the margin scales as $L^2$, an extra factor of $L$ over the weak mode in the same
setting, from $15.8\times$ at $L=4$ to $4095.8\times$ at $L=64$
(Fig.~\ref{fig:strongloop}B). The discovery overhead, dominated by the group-discovery
cost of the weak stage, is a fixed additive term that the $L^2$ reduction overtakes
immediately.

\begin{table}[t]
\centering
\caption{Strong-mode closed-loop budget (parameter-count proxy at accuracy $0.02$). The
weak reduction is $L$; the strong reduction is $L^2$.}
\label{tab:strongloop}
\begin{tabular}{rcccc}
\toprule
$L$ & $N_{\mathrm{disc}}$ & $N_{\mathrm{strong}}$ & weak ratio & strong ratio\\
\midrule
4  & 612 & 40{,}000     & 4.0$\times$  & 15.8$\times$\\
16 & 612 & 640{,}000    & 16.0$\times$ & 255.8$\times$\\
64 & 612 & 10{,}240{,}000 & 64.0$\times$ & 4095.8$\times$\\
\bottomrule
\end{tabular}
\end{table}

\begin{figure}[t]
\centering
\includegraphics[width=\columnwidth]{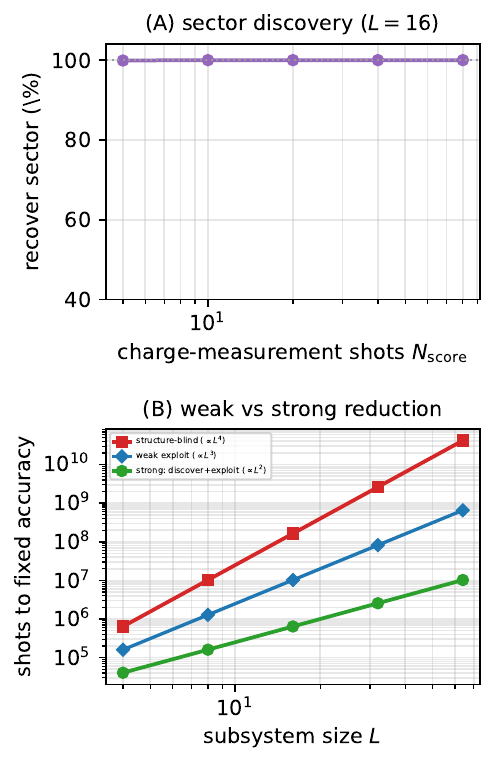}
\caption{Strong-mode loop on the two-system target. (A) The occupied sector is recovered
from a handful of charge-measurement shots. (B) Structure-blind cost grows as $L^4$, weak
exploitation as $L^3$, and strong discovery-plus-exploitation as $L^2$; the strong mode
buys an extra factor of $L$ over the weak mode.}
\label{fig:strongloop}
\end{figure}

\section{Beyond the cyclic case: non-Abelian symmetry}
\label{sec:nonabelian}

The twirl, the score, and the discovery circuit are defined for any finite group and any
unitary representation~\cite{serre1977linear,fulton1991representation}; the cyclic
demonstrations are the worked example, in which the symmetry-adapted basis is the Fourier
basis. For a general group the adapted basis is the group-Fourier, or isotypic,
decomposition, of which the Fourier basis is the cyclic instance, and the twirl is the
Reynolds projection onto the commutant in every case~\cite{werner1989quantum,bennett1996mixed}.
Two non-Abelian groups confirm that the go/no-go and nesting relations of
Section~\ref{sec:score} carry over unchanged.

The dihedral group $D_4$, in a two-copy action $U_g\otimes U_g$ on
$\mathbb{C}^4\otimes\mathbb{C}^4$ with a three-dimensional invariant sector, behaves exactly
as the cyclic case: a sector-confined target scores weak $1.000$ and strong $1.000$; a
block-diagonal target scores weak $1.000$ but strong $0.154$; a generic target fails both
($0.587$, $0.090$). The adapted basis here is the $D_4$ isotypic decomposition, not the
Fourier basis.

The symmetric group $S_n$, acting by permutation of $n$ qubits, is the more striking case.
Its adapted basis is the Schur-Weyl decomposition~\cite{bacon2006efficient,harrow2013church},
and the strong sector of greatest interest is the fully symmetric, or Dicke,
subspace~\cite{dicke1954coherence}, the invariant sector, of dimension $n+1$ against the
ambient $2^n$. Confinement to it is an exponential reduction.
Table~\ref{tab:schur} confirms the scores and reports the reduction: a Dicke-subspace target
scores one on both tests at every $n$; a permutation-invariant but unconfined target passes
weak and fails strong, with a gap that grows; and the strong reduction factor
$(2^n/(n+1))^2$ runs from $4$ at $n=3$ to $28$ at $n=5$ and exponentially beyond.

\begin{table}[t]
\centering
\caption{Symmetric group $S_n$: strong confinement to the symmetric (Dicke) subspace is an
exponential reduction. Scores are (weak / strong).}
\label{tab:schur}
\begin{tabular}{rcccc}
\toprule
$n$ & $2^n$ & Dicke dim $n{+}1$ & Dicke target & perm-invariant\\
\midrule
3 & 8  & 4 & 1.00 / 1.00 & 1.00 / 0.50\\
4 & 16 & 5 & 1.00 / 1.00 & 1.00 / 0.28\\
5 & 32 & 6 & 1.00 / 1.00 & 1.00 / 0.28\\
\bottomrule
\end{tabular}
\end{table}

Only the group-specific blocks of the discovery circuit change. The controlled group action
becomes a controlled permutation, a network of transpositions, the controlled SWAPs selected
by the control register, in place of the cyclic modular adder; the exploitation basis change
becomes the Schur transform~\cite{bacon2006efficient}, the $S_n$ analog of the quantum
Fourier transform; and the strong post-selection projects onto the symmetric subspace.
Figure~\ref{fig:schur} shows the modification. Discarding the control still gives the weak
test and post-selecting it the strong test, so the single circuit of
Section~\ref{sec:circuit} discovers both notions of symmetry for the non-Abelian case as
well.

\begin{figure}[t]
\centering
\resizebox{0.63\columnwidth}{!}{%
\begin{quantikz}[row sep=0.24cm, column sep=0.42cm]
\lstick{$C$ (perm.)} & \ctrl{1} & \qw & \qw \\
\lstick{$q_1$} & \swap{1} & \gate[3]{U_{\mathrm{Sch}}} & \qw \\
\lstick{$q_2$} & \targX{} & & \qw \\
\lstick{$q_3$} & \qw & & \qw
\end{quantikz}}
\caption{Schur-Weyl instantiation of the discovery circuit of Fig.~\ref{fig:circuit} for the
symmetric group. The controlled group action becomes a controlled permutation, transpositions
realized as controlled SWAPs selected by the control register, replacing the cyclic modular
adder; the exploitation basis change becomes the Schur transform $U_{\mathrm{Sch}}$, the
$S_n$ analog of the QFT, after~\cite{bacon2006efficient}. Discarding the control register
gives the weak test; post-selecting it gives the strong test, which here projects onto the
symmetric (Dicke) subspace.}
\label{fig:schur}
\end{figure}

\section{The exploitation step}
\label{sec:exploit}

Once $G^\star$ and the mode are fixed, the exploitation step is the established
symmetry-adapted reduction, and the pipeline inherits whatever reduction that step
provides. In the weak mode the matched basis block-diagonalizes the target: the Fourier
basis removes the $L^2/2$ off-diagonal correlators of momentum readout, the Schur basis
reduces the copies of spectrum estimation~\cite{odonnell2016efficient,keyl2001estimating,
bacon2006efficient}, and commuting-term grouping reduces the number of settings in
variational energy estimation~\cite{verteletskyi2020measurement,yen2020measuring,
crawford2021efficient}. In the strong mode the matched sector confines the target to a
subspace, and estimation runs within it. The pipeline does not change these reductions; it
supplies the group, and the sector, when they are not given.

Lest the non-Abelian examples invite the comparison, the discovery step is not the
hidden-subgroup problem~\cite{simon1997power,childs2010quantum}: there the group is known
and the difficulty is extracting a hidden subgroup from coset states by measurements
entangled across many copies, hard for non-Abelian groups, whereas blind matching scores
which candidate group's representation commutes with a given state, or into which sector it
is confined, by the single-copy commutant or projection computation of~\eqref{eq:score},
with no coset oracle and no entangled multi-copy measurement.

\section{Scope and extensions}
\label{sec:scope}

The weak demonstration is the cyclic case and the strong demonstration the two-system
case; the pipeline is stated for any finite group whose symmetry-adapted basis is
available, and the Schur and non-Abelian instances are extensions rather than demonstrated
results. For non-Abelian groups the matched basis block-diagonalizes rather than
diagonalizes, and the multiplicity within a block sets the within-block reduction.

The discovery cost is governed by the score gap, which is model-dependent. For a target
with little structure the gap between a match and an over-selection is small and discovery
is expensive, but that is precisely the regime where the exploitation reduction is also
small, so the pipeline does not spend discovery shots where it could not win the
exploitation back. The strong mode adds a restriction of its own: confinement to a single
sector is special, and the strong score certifies it only when the target genuinely lives
there; the weak mode applies whenever the target merely commutes with the representation.

The candidate family is fixed in advance and ordered by reduction. Discovery over an
unstructured family, the non-Abelian instance of the commutant scoring, and a
hardware-level accounting of the basis-change circuit are the natural next steps.

\section{Conclusion}

When the symmetry a quantum target carries is not known in advance, it can be discovered
from the data by scoring a family of candidate groups with a symmetry test, and then
exploited as the measurement basis to read the target in fewer shots. The same normalized
score, with the weak twirl or the strong sector projection, detects either notion of
symmetry, and the selection rule that takes the largest passing group is unbiased. Charged
in full for its own discovery, the loop wins: weak matching on momentum readout reduces
shots by a factor that widens from about ten to several thousand, and strong matching on a
two-system target reduces them by a further factor of the subsystem size. Blind symmetry
matching is a practical shot-reduction preprocessor for the common case where the matched
basis cannot be written down in advance.

\bibliographystyle{IEEEtran}
\bibliography{refs}
\end{document}